\pgfplotsset{compat=newest}
\tikzset{external/force remake}
\newcommand{\ul}[1]{\underline{#1}}
\newcommand{\ol}[1]{\overline{#1}}
\newcommand{\R}{\mathbb{R}}
\newcommand{\C}{\mathbb{C}}
\newcommand{\cH}{\mathcal{H}}
\newcommand{\cP}{\mathcal{P}}
\newcommand{\cS}{\mathcal{S}}
\newcommand{\vx}{\mathbf{x}}
\newcommand{\vw}{\mathbf{w}}
\newcommand{\vK}{\mathbf{K}}
\newcommand{\vy}{\mathbf{y}}
\newcommand{\vz}{\mathbf{z}}
\newcommand{\myRe}{\text{Re}}
\newcommand{\vDK}{\mathbf{\Delta K}}
\newcommand{\Hinf}{\cH_{\infty}}
\DeclareMathOperator*{\argmax}{arg\,max}
\newcommand{\bigSigma}{{\overline{\sigma}}}
\newcommand{\jw}{j\omega}
\newcommand{\kit}{{(k)}}
\newcommand{\kitprev}{{(k-1)}}
\newcommand{\muit}{{(\mu)}}
\newcommand{\muitprev}{{(\mu-1)}}
\newcommand{\HinfNorm}[1]{\left\| #1 \right\|_\infty}
\newcommand{\vKmin}{\ul{\vK}}
\newcommand{\vKmax}{\ol{\vK}}
\newcommand{\myDist}{\text{dist}}
\newtheorem{theorem}{Theorem}
\newtheorem{assumption}{Assumption}
\newtheorem{corollary}{Corollary}
\newenvironment{proof}{\textit{Proof.}}{~\hfill\rule{0.75em}{0.75em}\\}
\newlength\fheight
\newlength\fwidth
\newlength\fheightTwo
\newif\ifcommenttorolf
\newif\ifcommenttoulrich
\begin{document}
\begin{frontmatter}

\title{Controller tuning in power systems using singular value optimization}

\thanks[authorinfo]{
	\mbox{\hspace{-0.1cm} AM (amer.mesanovic@siemens.com), RF (rolf.ﬁndeisen@ovgu.de)} are 	with the Laboratory for Systems Theory and Automatic Control, Otto-von-	Guericke-University Magdeburg, Germany, AM is also with the Siemens AG, Munich, UM (ulrich.muenz@siemens.com) is with Siemens Corp., Princeton.}

\thanks[footnoteinfo]{
	This work has been partially funded by the German Federal Ministry of Education and Research (BMBF) under Grant number 01S18066B in the frame of the AlgoRes project.}


\author{Amer Me{\v s}anovi{\'c}\quad}
\author{Ulrich M{\"u}nz \quad} 
\author{Rolf Findeisen}


\begin{abstract}                
As the share of renewable generation in large power systems continues to increase, the operation of power systems becomes increasingly challenging. The constantly shifting mix of renewable and conventional generation leads to largely changing dynamics, increasing the risk of blackouts. We propose to retune the parameters of the already present controllers in the power systems to account for the seemingly changing operating conditions. To this end, we present an approach for fast and computationally efficient tuning of parameters of structured controllers. The goal of the tuning is to shift system poles to a specified region in the complex plane, e.g. for stabilization or oscillation damping. The approach exploits singular value optimization in the frequency domain, which enables scaling to large systems and is not limited to power systems. The efficiency of the approach is shown on three systems of increasing size with multiple initial parameterizations. 
\end{abstract}

\begin{keyword}
Pole placement, singular value, optimization, stabilization, linear matrix inequalities
\end{keyword}

\end{frontmatter}

\section{Introduction}


The rising share of renewable generation in power systems leads to increased volatility and uncertainty in their operation. 
Depending on weather conditions, the power generation of renewables in the power grid varies temporally, as well as geographically. Furthermore, in the case of challenging weather conditions, conventional generation, e.g. thermal power plants, needs to compensate for the reduced renewable power generation. This leads to changing dynamics in power systems, such as time-varying oscillatory modes \citep{AlAli14, crivellaro2019beyond}. 
Doing so is challenged by the fact that existing automation systems are designed for fixed, known, dynamics \citep{kundur93a} and that control systems of components are typically parameterized manually during installation, which is a time-consuming task. 
Not adapting the system leads to time-varying dynamics in the system, which increases the risk of a blackout. Changing the existing control system entirely to increase the stability margins is typically not possible for cost reasons and as the existing systems are trusted by operators due to decades of practical operation.

We propose to address this challenge by adapting the parameters of the existing control system to the changing conditions. To this end, we propose an iterative singular value optimization based approach, which shifts poles of linear systems to desired regions.
The approach allows to tune the parameters of existing structured controllers in the system, such as PID controllers, washout filters and notch filters allowing for a nonlinear dependency on the controller parameters. Shifting poles to specific regions allows to guarantee properties such as stability, improved oscillation damping, and robustness \citep{Scherer2005}.


Notably in recent years, significant advances were made with respect to the characterization of the set of internally stabilizing controllers. Many extensions from the famous Youla parameterization \citep{youla1976modern}, such as~\cite{bamieh2002distributed} and~\cite{nayyar2013decentralized} have been achieved. Other approaches include quadratic invariance \citep{rotkowitz2005characterization}, system level approaches \citep{wang2019system}, and closed-loop response shaping \citep{zheng2019parameterization}. These approaches consider dynamic state- or output-feedback controllers with linear dynamic dependencies subject to various constraints on the controller parameters. A second area with fruitful results in recent years is $\Hinf$ optimization with stability constraints,~\cite{benner2018low,apkarian2018structured}.
Approaches for pole placement, which can shift poles to specified regions in the complex plane, typically introduce a Lyapunov matrix for the synthesis procedure \citep{Chilali96}, making the approach generally less scalable. Only few approaches consider the optimization of existing controller parameters in power systems \citep{befekadu2006robust,Marinescu09,kammer2017decentralized}. These approaches require either manual adaptation for each power system \citep{Marinescu09}, or assume specific dependencies on the controller parameters \citep{befekadu2006robust,kammer2017decentralized}.

In contrast to existing works, we propose a method for pole placement based on singular value optimization. It is based on frequency domain considerations, and thus avoids the direct formulation of (large-scale) Lyapunov matrices, making the approach generally scalable to larger systems. Previous works of the authors consider H-infinity controller synthesis, see \citep{Mesanovic17ISGT, Mesanovic18ACC, mesanovic2018optimalparameter}.

The remainder of this work is organized as follows: Section~\ref{sec.Model} derives suitable models and formulates the pole placement problem tailored towards power systems.
The proposed approach is introduced in Section~\ref{section.TuningApproach}.
In Section~\ref{Section.NumExamples} the performance of the approach is underlined considering three numerical examples with multiple initial parameterizations. Finally, conclusions are provided in Section~\ref{sec.Conclusion}.

\subsection{Mathematical preliminaries}
\label{subsec.MathPrelim}
We denote by $(\cdot)^*$  the conjugate transpose of a matrix, $\bigSigma(\cdot)$ and $\overline{\lambda} (\cdot)$ denote the largest singular value and largest eigenvalue, respectively, of a matrix. The notation $\succ$ ($\succeq$), and $\prec$ ($\preceq$) is used to denote positive (semi)definiteness and negative (semi)definiteness of  a matrix, respectively. We use $j$ to denote the imaginary unit, $\R_{\ge 0}$ denotes the set of non-negative real numbers, 
$\C$ denotes the set of complex numbers, and $\C_{>0}$ denotes the set of complex numbers with a positive real part. 
\begin{defn}[\cite{lunze2013regelungstechnik}]\label{def.MIMOPoles}
	A complex number $s_p$ is a pole of the transfer matrix $G(s) : \C \rightarrow \C^{n_y \times n_w}$, if at least one element $G_{ij}(s)$ of $G(s)$ has a pole at $s_p$.
\end{defn}
We define the distance between a point $s \in \C$ and the curve $\gamma(t) \in \C$, $t \in \R$, as $\myDist(s, \gamma) = \min_{t \in \R} | s - \gamma(t) |$.

\section{Controller tuning for power systems}
\label{sec.Model}

\begin{figure}[tb]
	\centering
	\includegraphics[width=1\columnwidth]{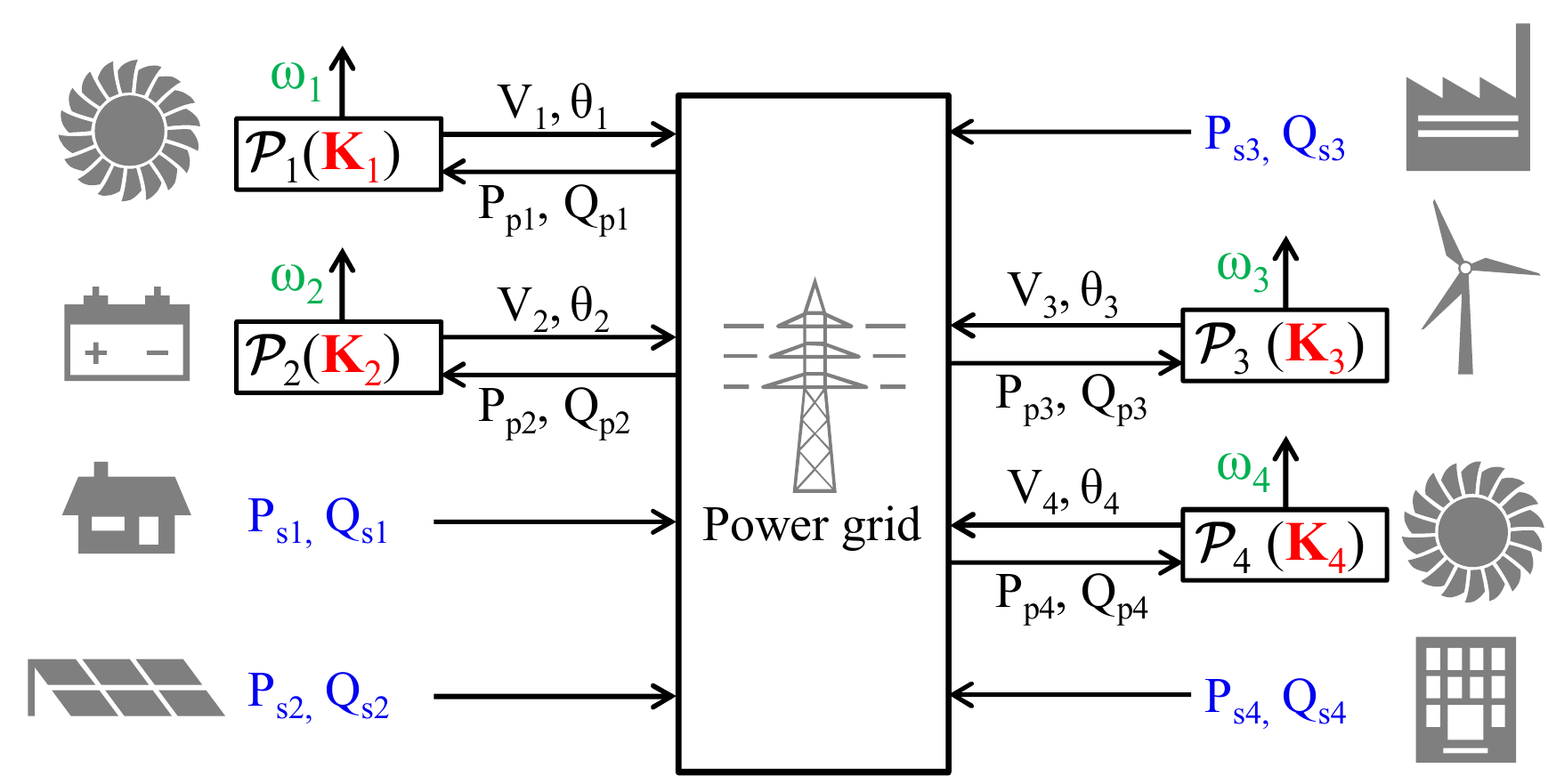}
	\caption{Exemplary power system consisting of four dynamic prosumers $\cP_i$ and four static prosumers $P_{si}$, $Q_{si}$. The tunable controller parameters $\vK_{i}$ of the dynamic prosumers are marked red. The static prosumers, marked with blue, are considered to be disturbance inputs into the system. The frequencies $\omega_i$ are outputs, marked green.}
	\label{fig.GridModel}
\end{figure}

Power systems consist of heterogeneous components, such as power plants, renewable generation, storage systems, households, and factories. For simplicity of presentation, we denote these components as prosumers, as they can either produce or consume electric power, c.f. Fig.~\ref{fig.GridModel}. We differentiate between static and dynamic prosumers, depending on whether their internal dynamic behavior is modeled. The static and dynamic prosumers are coupled through the power grid.
We outline subsequently how dynamic and static prosumers, as well as the power grid, can be modeled.

\subsection{Power system models}
\textbf{Dynamic prosumers: }
Dynamic prosumers are dynamic systems with internal states, denoted with $\cP_i$. We outline the structure of one dynamic prosumer, a power plant. Elements such as inverters or loads can be modeled as dynamics prosumers as well, however this is beyond the scope of this work. Dynamic prosumers contain tunable controller parameters $\vK_i$, marked red in Fig.~\ref{fig.GridModel}.
\begin{figure}[tb]
	\centering
	\includegraphics[width=1\columnwidth]{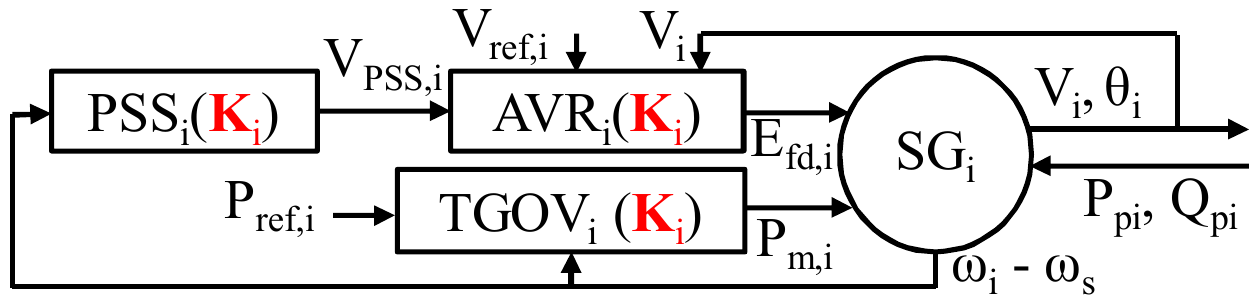}
	\caption{Simplified model of a dynamic prosumer $\cP_i$, a power plant. It consists of a synchronous generator (SG$_i$), automatic voltage regulator and exciter (AVR$_i$), a turbine and governor model (TGOV$_i$), and in some cases a power system stabilizer (PSS$_i$). Tunable parameters in the model are marked red.}
	\label{fig.SGModel}
\end{figure}
The structure of a typical power plant is shown in Fig.~\ref{fig.SGModel}. It consists of a synchronous generator (SG$_i$), an automatic voltage regulator and an exciter (AVR$_i$), a governor and turbine (TGOV$_i$), and in some cases a power system stabilizer (PSS$_i$).
AVR$_i$ controls the voltage $V_{i}$ at the power plant terminals to a reference value $V_{ref,i}$. To further improve the stability of a power system, power plants are sometimes equipped with a PSS. PSSs improve the system stability and increase the damping of oscillations by considering the plant frequency $\omega_i$ or power $P_{pi}$ in the voltage control.
TGOV$_i$ controls the generator frequency by adapting the mechanical power $P_{m,i}$ transfered to the synchronous generator. In practice, many different controllers are used \citep{IEEEExciters06}. Elements with tunable controller parameters, i.e. AVR$_i$, PSS$_i$ and TGOV$_i$, are shown as functions of $\vK_i$ in Fig.~\ref{fig.SGModel}. 


\textbf{Static prosumers: }
Static prosumers represent elements with no internal states/dynamics. They are characterized by their active power infeed $P_{si}$ and reactive power infeed $Q_{si}$, which are considered as external inputs into the model. Static prosumers, marked blue in Fig.~\ref{fig.GridModel}, often model renewable generation and loads~\citep{poolla2019placement}.
A subset of static prosumer infeeds is chosen as the disturbance input $\vw_s$ into the system.



\textbf{Power grid: } 
Dynamic and static prosumers are interconnected through the power grid, consisting of power lines, cables etc. The dynamic response of the power grid is typically orders of magnitude faster than the generation dynamics relevant for stability studies \citep{kundur93a}. For this reason, we model the grid using algebraic power flow equations
\begin{subequations} 
	\label{eq.PowerFlow}
	\begin{align}
	\!\!\!\!&P_i \! = \! \sum_{j=1}^{N_B}  V_{Bi} V_{Bj}\big( G_{cij}\cos \Delta\theta_{Bij} \! + \! B_{sij}\sin \Delta\theta_{Bij} \big)\label{eq.activepower} \\
	\!\!\!\!&Q_i \! = \! \sum_{j=1}^{N_B}  V_{Bi} V_{Bj} \big( G_{cij}\sin \Delta\theta_{Bij} \! - \! B_{sij}\cos \Delta\theta_{Bij}  \big), \label{eq.reactivepower}
	\end{align}
\end{subequations}
where $N_B$ is the number of buses (nodes) in the power system, $P_i$ and $Q_i$, are the injected active and reactive powers into the i-th bus (node) in the grid by a dynamic prosumer ($P_{pi}$, $Q_{pi}$) or a static prosumer ($P_{si}$, $Q_{si}$),
$V_{Bi}$ and $\theta_{Bi}$ are the magnitude and angle of the voltage phasor at the i-th bus from a dynamic prosumer ($V_i$, $\theta_i$) or a static prosumer ($V_{si}$, $\theta_{si}$),
and $G_{cij}$ and $B_{sij}$ are the elements of the  conductance and susceptance matrix of the grid, see \citep{kundur93a} for a detailed explanation.

\textbf{System output: }
The proposed approach requires the definition of a suitable output. In power systems, the frequencies of the dynamic prosumers, marked green in Fig.~\ref{fig.GridModel}, is defined by $\omega_i = \dot{\theta}_i$, where $\theta_i$ is the angle of the voltage phasor of $\cP_i$, are typically used to asses the system performance and stability \citep{kundur93a}. Consequently, we choose the vector of frequencies as the system output
\begin{align}
\vy = \begin{pmatrix} \omega_1 & ... & \omega_N	\end{pmatrix}^T. \label{eq.perfOutInit}
\end{align}
Here $N$ denotes the number of dynamic prosumers, and $\omega_i$ is the voltage frequency of $\cP_i$.

\subsection{Overall problem setup}
\label{subsec.CoupledModel}
When equations of prosumers are coupled via the power grid~\eqref{eq.PowerFlow}, a nonlinear differential-algebraic system of the form
\begin{subequations} \label{eq.nonlinearModel}
	\begin{align}
	\dot{\vx} =& f(\vx,\vw_s,\vK) \\
	0 =& h(\vx,\vw_s,\vK) \label{eq.nonlinearModel.alg}
	\end{align}
\end{subequations}
is obtained. Here $\vx \in \R^{\cdot N_x}$ combines all dynamic prosumer states,
$\vw_s \in \R^{n_D}$ is the vector of disturbances representing static prosumers,
$\vK \in \R^{N_K}$ is the vector of tunable controller parameters of all dynamic prosumers,
$f$ describes the prosumer dynamics, and $h$ represents the power flow equation~\eqref{eq.PowerFlow} and algebraic equations from dynamic prosumers. 

We linearize~\eqref{eq.nonlinearModel} around a known steady-state $\vx_0$ with the known input $\vw_0$. While this is an approximation, it allows us to use linear systems theory. It has furthermore been shown to be sufficient even for large-scale disturbances \citep{poolla2019placement}. After eliminating the linearized algebraic equation~\eqref{eq.nonlinearModel.alg}, 
we obtain 
\begin{subequations} \label{eq.linearizedModel}
	\begin{align}
	\dot{\vx} &=  A(\vK)  \vx +   B (\vK) \vw \\
	\vy & = C \vx.
	\end{align}
\end{subequations}
Note that the dependency of $A$ and $B$ on $\vK$ may be nonlinear.
The resulting state space system can be written in the frequency domain as
\begin{align}
G(\vK,s) = C \left(sI - A(\vK)\right)^{-1} B(\vK).
\end{align}
\begin{assumption}~\label{assum.ContParams}
	$G(\vK, s)$ is a continuous functions of the controller parameters $\vK$.
\end{assumption}
This assumption is satisfied for almost all practically relevant control elements, such as PID controllers, notch filters, lead-lag filters, washout filters etc. It does not introduce a significant restriction for the applicability of subsequent theorems.
We present in the following section an approach to tune $\vK$ in order to shift poles of $G$ to desired regions in the complex plane.

\section{Tuning approach}
\label{section.TuningApproach}

\begin{figure}[tb]
	\centering
	\includegraphics[width=0.6\columnwidth]{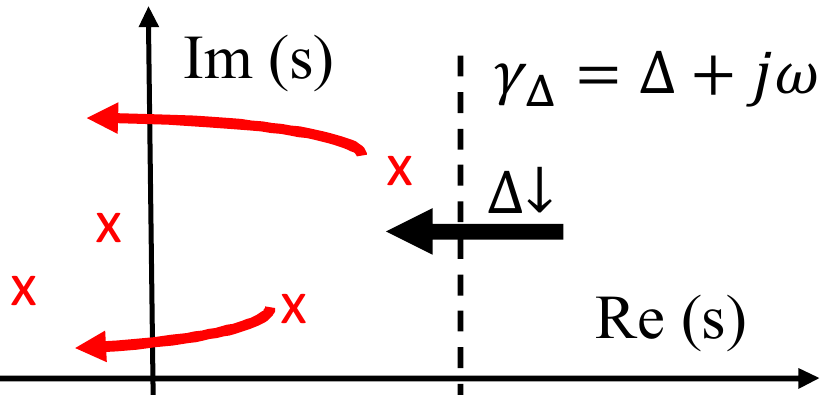}
	\caption{Visualization of the proposed approach for stabilization. By iteratively reducing $\Delta$ and minimizing $\Gamma$ along $\gamma = \Delta + \jw$, the poles, marked with red crosses, are shifted to the left half-plane.}
	\label{fig.AlgStabVis}
\end{figure}

For the proposed approach, a continuous ``optimization curve'' $\gamma (t)$, $t \in \R$ must first be defined, which is used to place the poles into a desired region. One example for $\gamma$ is a vertical axis in the complex plane $\gamma_\Delta(\omega)= \Delta + \jw$.
We denote the maximum of $\bigSigma(G(s))$ along $\gamma (t)$ as 
\begin{align}
\Gamma(\vK) = \max_t \:\: \bigSigma(G(\gamma(t))).
\end{align}
In the proposed approach, we show that by minimizing $\Gamma(\vK)$, the distance between the optimization curve $\gamma(t)$ and sufficiently close poles of $G$ is increased. Thus, by iteratively shifting $\gamma$, poles of $G$ can be placed into desired regions in the complex plane. Figure~\ref{fig.AlgStabVis} illustrates the basic idea of the approach with the left half-plane as the desired region. For this purpose, the curve $\gamma_\Delta(\omega)= \Delta + \jw$ is used and iteratively shifted to left.

We introduce in this section several results which enable the formulation of the pole placement approach:
\begin{itemize}
	\item We first show in Theorem~\ref{thm.SigmaApprox} how $\bigSigma(G)$ can be approximated in the neighborhood of one pole.
	\item We then show in Theorem~\ref{lemma.PolesNoCross} that minimization of $\Gamma$ will not allow poles to cross to the other side of $\gamma$.
	\item Theorem~\ref{thm.PoleShift} shows how the minimization of $\Gamma$ will increase the distance between a sufficiently close pole and $\gamma$.
	\item Finally, Theorem~\ref{thm.GammaMin} introduces an approach to minimize $\Gamma$.
\end{itemize} 
To shorten the notation, we avoid explicitly writing the dependency of $G$ on $\vK$. It is assumed, however that $G$ is always a function of $\vK$.

\begin{theorem}\label{thm.SigmaApprox}
	Given a detectable multi-input multi-output system $G(s)$ with $n$ rows and $m$ columns, and with a set of poles $\cS_p$. Then, $\bigSigma(G(s))$ can be approximated in a sufficiently small neighborhood of one pole $s_{p} \in \cS_p$ by
	\begin{align}
		\bigSigma (G(s)) \approx \frac{a}{|s-s_p|^{n_p}}, \label{eq.SigmaApprox}
	\end{align}
	where $n_p$ is the largest multiplicity of $s_p$ in any single-input single-output transfer function $G_{ij}(s)$ in $G(s)$, and $a \in \R_{\ge 0}$.
\end{theorem}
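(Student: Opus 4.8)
The plan is to expand $G$ in a Laurent series about the pole $s_p$ and then use that $\bigSigma(\cdot)$ is an absolutely homogeneous matrix norm to isolate the dominant singular term. Since the poles of $G$ are isolated, I would first fix a punctured neighbourhood of $s_p$ that contains no other pole of $G$ and write, for each entry,
\[
G_{ij}(s) = \sum_{k=1}^{m_{ij}} \frac{r_{ij,k}}{(s-s_p)^{k}} + g_{ij}(s),
\]
where $m_{ij}\ge 0$ is the order of the pole of $G_{ij}$ at $s_p$ (with $m_{ij}=0$ and an empty sum when $G_{ij}$ is analytic there), $r_{ij,k}\in\C$ are the principal-part coefficients, and $g_{ij}$ is analytic near $s_p$. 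Collecting the coefficients of equal negative powers into matrices $R_k=[r_{ij,k}]\in\C^{n\times m}$ and setting $G_0(s)=[g_{ij}(s)]$ yields
\[
G(s) = \sum_{k=1}^{n_p} \frac{R_k}{(s-s_p)^{k}} + G_0(s),
\qquad n_p=\max_{i,j} m_{ij},
\]
with $G_0$ analytic, hence bounded, on the chosen neighbourhood. By the definition of $n_p$ some entry has a pole of order exactly $n_p$, so $R_{n_p}\neq 0$ and $\bigSigma(R_{n_p})>0$.

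Next I would exploit that $\bigSigma(\cdot)$ is subadditive and satisfies $\bigSigma(cM)=|c|\,\bigSigma(M)$ for scalars $c$. The reverse triangle inequality then gives
\[
\Bigl|\,\bigSigma(G(s)) - \frac{\bigSigma(R_{n_p})}{|s-s_p|^{n_p}}\,\Bigr|
\;\le\; \bigSigma\!\Bigl(G(s) - \frac{R_{n_p}}{(s-s_p)^{n_p}}\Bigr)
\;\le\; \sum_{k=1}^{n_p-1}\frac{\bigSigma(R_k)}{|s-s_p|^{k}} + \bigSigma(G_0(s)).
\]
For $s$ in a sufficiently small punctured neighbourhood of $s_p$ the right-hand side is bounded by $C\,|s-s_p|^{-(n_p-1)}$ for some constant $C$, which is of strictly lower order than $\bigSigma(R_{n_p})\,|s-s_p|^{-n_p}$ as $s\to s_p$. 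Dividing through, $|s-s_p|^{n_p}\,\bigSigma(G(s))\to\bigSigma(R_{n_p})$, i.e.
\[
\bigSigma(G(s)) = \frac{\bigSigma(R_{n_p})}{|s-s_p|^{n_p}}\bigl(1+o(1)\bigr),
\]
which is exactly \eqref{eq.SigmaApprox} with $a=\bigSigma(R_{n_p})\in\R_{\ge 0}$ (in fact $a>0$).

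I do not anticipate a genuine obstacle here: the argument is the Laurent expansion combined with homogeneity and continuity of the spectral norm. The two points that need slight care are (i) verifying $R_{n_p}\neq 0$, which is immediate once $n_p$ is read as the largest entrywise pole multiplicity, and (ii) stating precisely in what sense ``$\approx$'' is meant, namely as the leading-order asymptotics as $s\to s_p$ with vanishing relative error; the lower-order principal-part terms $R_1,\dots,R_{n_p-1}$ are present but negligible close to the pole. Detectability of the underlying realization is not needed for this local estimate; it serves only to guarantee that the poles relevant for the subsequent tuning results genuinely appear in $G$.
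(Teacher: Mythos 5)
Your proof is correct and follows essentially the same route as the paper: expand each entry to leading order at $s_p$, discard the lower-order contributions, and identify $a$ as the spectral norm of the matrix of highest-order coefficients (your $\bigSigma(R_{n_p})$ coincides with the paper's $\max_{\|\vz\|_2=1}$ expression over the entries with multiplicity $n_p$). The only difference is bookkeeping: where the paper sets the terms $(s-s_p)^{n_p-n_{ij}}$ with $n_{ij}<n_p$ ``approximately to zero'' inside the maximization, you control the error explicitly via the reverse triangle inequality, which yields the sharper and more precise statement $\bigSigma(G(s))=\bigSigma(R_{n_p})|s-s_p|^{-n_p}(1+o(1))$ and makes the meaning of ``$\approx$'' exact.
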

\begin{proof}
	Per definition \citep{hopcroft2012computer}
	\begin{align}
		\bigSigma(G(s)) \! = \! \max_{\|\vz\|_2 = 1} \! \| G(s) \vz \|_2 
		 \! = \! \! \max_{\|\vz\|_2 = 1} \sqrt{\sum_{i = 1}^{n} \left( \sum_{j = 1}^{m} G_{ij}(s) z_j \right)^2 }. \label{eq.BigSigmaDef}
	\end{align}
	Without restriction of generality, we express every transfer function $G_{ij}(s)$ as
	\begin{align}
		G_{ij}(s) = \frac{N_{ij}(s)}{(s - s_p)^{n_{ij}} D_{ij}(s)}.
	\end{align}
	Here $n_{ij}$ is the multiplicity of $s_p$ in $G_{ij}$, where $n_{ij}$ may be zero, and $N_{ij}(s)$, $D_{ij}(s)$ are polynomials with no zeros in a neighborhood of $s_p$. In a sufficiently small neighborhood of $s_p$, $G_{ij}$ can be approximated as
	\begin{align}
		G_{ij}(s) \approx \frac{N_{ij}(s_p)}{(s - s_p)^{n_{ij}} D_{ij}(s_p)}  = \frac{a_{ij}}{(s - s_p)^{n_{ij}}}.
	\end{align}
	From this approximation and~\eqref{eq.BigSigmaDef} it follows that
	\begin{align}
	\bigSigma(G(s)) \approx &  \frac{1}{|s - s_p|^{n_p}} \times \\   
	&\max_{\|\vz\|_2 = 1} \sqrt{\sum_{i = 1}^{n} \left( \sum_{j = 1}^{m} a_{ij} (s - s_p)^{n_p - n_{ij}} z_j \right)^2 }. \nonumber
	\end{align}
	In a sufficiently small neighborhood of $s_p$, $(s - s_p)^{n_p - n_{ij}}$ is approximately zero if $n_{ij}<n_p$, leading to
	\begin{align}
		\bigSigma(G(s)) \approx \frac{1}{|s - s_p|^{n_p}} \max_{\|\vz\|_2 = 1} \sqrt{\sum \left( \sum a_{kl} z_j \right)^2 }.
	\end{align}
	Here $a_{kl}$, with a small abuse of notation, denotes a transfer function in which $s_p$ has a multiplicity $n_p$.
	 By denoting the solution of the maximization problem in the previous equation as $a \in \R_{\ge 0}$,we obtain~\eqref{eq.SigmaApprox}. 
\end{proof}


%

\begin{corollary} \label{corol.GammaLarge}
	Given a detectable system $G(s)$, and the set of poles $\cS_p$ of $G$. 
	If a system pole $s_{p} \in \cS_P$ is sufficiently close to $\gamma$, i.e. if $\myDist(s_p, \gamma)$ is sufficiently small, then $\Gamma$ is reached in a neighborhood of $s_p$.
\end{corollary}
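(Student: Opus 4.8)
The plan is to combine the local blow-up of $\bigSigma(G(s))$ near $s_p$ supplied by Theorem~\ref{thm.SigmaApprox} with a uniform boundedness estimate for $\bigSigma(G(\gamma(t)))$ away from $s_p$. First I would set $d \DefinedAs \myDist(s_p,\gamma)$ and pick $t_0$ attaining the distance, i.e. $|s_p - \gamma(t_0)| = d$ (the minimum is attained by the very definition of $\myDist$). By Theorem~\ref{thm.SigmaApprox}, in a fixed neighborhood of $s_p$ one has $\bigSigma(G(s)) \approx a\,|s - s_p|^{-n_p}$ with $a>0$; positivity follows because, by Definition~\ref{def.MIMOPoles}, $s_p$ genuinely appears in some $G_{ij}$ with multiplicity equal to the maximal multiplicity $n_p$, so in the proof of Theorem~\ref{thm.SigmaApprox} the corresponding coefficient $a_{ij}\neq 0$ and hence $a \ge |a_{ij}| > 0$. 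Consequently, for $d$ small enough $\bigSigma(G(\gamma(t_0))) \ge \tfrac{a}{2}\, d^{-n_p}$, so $\Gamma(\vK) \ge \tfrac{a}{2}\,d^{-n_p} \to \infty$ as $d \to 0$.

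Next I would fix an arbitrary neighborhood $\cN$ of $s_p$ and, without loss of generality, shrink it so that it lies inside the region where the approximation of Theorem~\ref{thm.SigmaApprox} is valid and contains no pole of $G$ other than $s_p$. On the set $T_\cN \DefinedAs \{\, t \in \R : \gamma(t) \notin \cN \,\}$ the map $t \mapsto \bigSigma(G(\gamma(t)))$ is continuous, since $\gamma$ meets no pole of $G$ there, and it tends to $0$ as $|t| \to \infty$ because $G(\vK,s) = C(sI-A)^{-1}B$ is strictly proper and $\gamma$ leaves every bounded set. Hence $\bigSigma(G(\gamma(\cdot)))$ is bounded on $T_\cN$ by some constant $M < \infty$, and (being continuous on $\R$ and vanishing at infinity with $\Gamma>0$) it in fact attains its global maximum $\Gamma$, so "being reached" is meaningful.

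Finally I would choose $d$ small enough that $\tfrac{a}{2}\,d^{-n_p} > M$. Then $\Gamma(\vK) \ge \tfrac{a}{2}\,d^{-n_p} > M \ge \sup_{t \in T_\cN} \bigSigma(G(\gamma(t)))$, so no maximizer of $\bigSigma(G(\gamma(\cdot)))$ can lie in $T_\cN$; every maximizer $t^\star$ therefore satisfies $\gamma(t^\star) \in \cN$. Since $\cN$ was an arbitrary neighborhood of $s_p$, this shows $\Gamma$ is reached in a neighborhood of $s_p$, as claimed.

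The main obstacle is making the bound $M$ on $T_\cN$ independent of $d$. This is immediate as long as $\gamma$ stays bounded away from every pole other than $s_p$ — the regime relevant to the iterative scheme, where a single pole is being driven toward $\gamma$ — combined with the decay of $\bigSigma(G(\gamma(t)))$ at $|t|=\infty$. If several poles are comparably close to $\gamma$, one must read "sufficiently close" as "sufficiently closer to $\gamma$ than any other pole", and then the same comparison of $\tfrac{a}{2}\,d^{-n_p}$ against the (still controllable) contribution of the remaining poles near their own closest points carries the argument through.
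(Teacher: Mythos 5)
Your argument is correct and is essentially the fleshed-out version of the paper's own (deliberately abbreviated) proof, which likewise rests on the continuity of $\bigSigma(G(s))$ away from the poles together with the blow-up at $s_p$ from Theorem~\ref{thm.SigmaApprox}. The details you supply beyond the paper's sketch --- positivity of $a$, the decay of $\bigSigma(G(\gamma(t)))$ as $|t|\to\infty$ from strict properness, attainment of the maximum, and the caveat that $\gamma$ must stay bounded away from the other poles --- are exactly the points the paper leaves implicit, and they are handled correctly.
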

\begin{proof}
	As $\bigSigma(G(s))$ is a continuous function of $s$ \citep{de1989analytic}, and $\bigSigma(G(s))$ reaches $+\infty$ at $s_p$ (Theorem~\ref{thm.SigmaApprox}), the claim can be trivially shown. The full proof is left out due to page restrictions.
\end{proof}

To clarify the claim of Corollary~\ref{corol.GammaLarge}, we consider the system
\begin{align}
G'(s) = \begin{pmatrix}
\frac{s}{(s+1)(s+2)} & \frac{s-3}{s^2 + 3s + 3} \\
\frac{s^2+4s + 10}{(s^2 - s + 1)} & \frac{s+4}{(s+1)(2s-1)}.	
\end{pmatrix}
\end{align}
This system has the pole set $\cS_p' = \{-1, -2, 0.5, -1.5 \pm j 0.87, 0.5 \pm j 0.87\}$.
Figure~\ref{fig.examplePoles} shows the largest singular value plot of $G'(s )$.
\begin{figure}[tb]
	\centering
	\includegraphics[width=1\columnwidth]{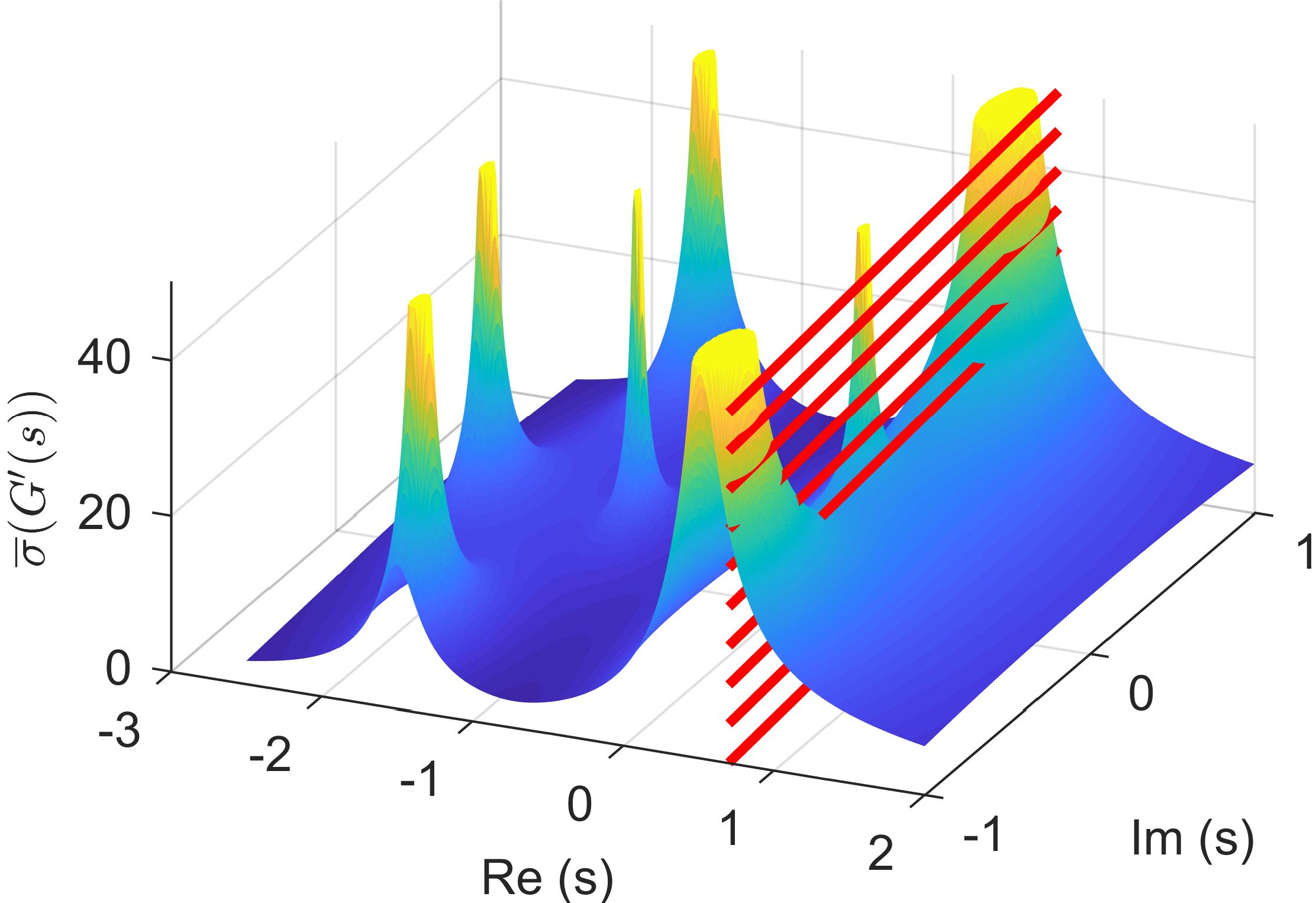}
	\caption{Graphical representation of $\bigSigma(G'(s))$; $\bigSigma(G'(s))$ approaches infinity in the surrounding of any $s_{pij}\in \tilde \cS$. Red lines visualize the optimization curve $\gamma(\omega) = 0.7 + \jw$.}
	\label{fig.examplePoles}
\end{figure}
Red lines visualize the optimization curve $\gamma(\omega) = 0.7 + \jw$, along which $\bigSigma(G(s))$ is evaluated. It shows that $\bigSigma(G(\gamma(t))$ approaches high values in the proximity of system poles (Corollary~\ref{corol.GammaLarge}).
\begin{theorem} \label{lemma.PolesNoCross}
	Given a detectable system $G(\vK, s)$ with the set of poles $\cS_p(\vK)$ that satisfies Assumption~\ref{assum.ContParams}. Furthermore, assuming that no parameter-dependent pole-zero cancellations occur on $\gamma$. Then, $s_{p} \in \cS_p$ will not cross $\gamma$ during the minimization of $\Gamma$.
\end{theorem}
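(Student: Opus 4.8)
The plan is to argue by contradiction, exploiting the blow-up of $\bigSigma(G)$ near a pole established in Theorem~\ref{thm.SigmaApprox}. I read ``minimization of $\Gamma$'' as a process producing a continuous path $\vK(\tau)$, $\tau \in [0,1]$, in parameter space along which $\Gamma(\vK(\tau))$ is non-increasing, so that $\Gamma(\vK(\tau)) \le \Gamma(\vK(0)) < \infty$ for all $\tau$; this holds for any descent-type scheme and should be stated as a standing hypothesis. Suppose, for contradiction, that some pole of $G$ crosses $\gamma$ along this path.

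First I would make precise that the relevant pole depends continuously on $\vK$. The poles in $\cS_p(\vK)$ are eigenvalues of $A(\vK)$, which by Assumption~\ref{assum.ContParams} and the continuity of eigenvalues with respect to matrix entries vary continuously with $\vK$. Pick the eigenvalue branch $s_p(\vK(\tau))$ that allegedly crosses $\gamma$. Since this branch is continuous and its endpoints lie on opposite sides of $\gamma$, there is a parameter $\tau^\star$ with $s_p(\vK(\tau^\star)) \in \gamma$, and hence $\myDist(s_p(\vK(\tau)), \gamma) \to 0$ as $\tau \to \tau^\star$. The hypothesis that no parameter-dependent pole-zero cancellation occurs on $\gamma$ guarantees that near $\tau^\star$ this eigenvalue is a genuine pole of the transfer matrix $G$, i.e. it is not cancelled by a zero of $G$ as it reaches $\gamma$.

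Next I would quantify the resulting blow-up of $\Gamma$. By Theorem~\ref{thm.SigmaApprox}, in a sufficiently small neighborhood of $s_p$ one has $\bigSigma(G(s)) \approx a/|s - s_p|^{n_p}$ with $a \ge 0$, and the no-cancellation hypothesis rules out $a = 0$ along $\gamma$, so $a > 0$. Evaluating at the point $\gamma(t)$ closest to $s_p$ gives $\bigSigma(G(\gamma(t))) \approx a/\myDist(s_p,\gamma)^{n_p}$, and by Corollary~\ref{corol.GammaLarge} the maximum $\Gamma$ is attained in a neighborhood of $s_p$ once $\myDist(s_p,\gamma)$ is small enough. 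Therefore $\Gamma(\vK(\tau)) \ge a/\myDist(s_p(\vK(\tau)),\gamma)^{n_p} - o(1) \to +\infty$ as $\tau \to \tau^\star$, contradicting $\Gamma(\vK(\tau)) \le \Gamma(\vK(0)) < \infty$. Hence no pole of $G$ crosses $\gamma$ during the minimization of $\Gamma$.

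The main obstacle I expect is the degenerate case $a = 0$ on $\gamma$: one must ensure that a zero of $G$ potentially cancelling the crossing pole cannot itself migrate onto $\gamma$ under the deformation $\vK(\tau)$, which is exactly what the no-pole-zero-cancellation hypothesis asserts but which may need a short supporting argument. The remaining ingredients --- continuity of eigenvalues in $\vK$ and monotonicity of $\Gamma$ along the optimization path --- are standard or hypothesis-level, though one should also remark that poles may change multiplicity as $\vK$ varies, so the argument is localized to the single branch $s_p(\vK(\tau))$ near $\tau^\star$, where finiteness of $\Gamma$ keeps that branch bounded away from $\gamma$ until the (impossible) crossing.
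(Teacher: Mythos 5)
Your proposal is correct and follows essentially the same route as the paper's own (very brief) proof: continuity of the pole branch in $\vK$ under Assumption~\ref{assum.ContParams}, the blow-up of $\Gamma$ near a pole approaching $\gamma$ via Theorem~\ref{thm.SigmaApprox} and Corollary~\ref{corol.GammaLarge}, and the contradiction with the non-increase of $\Gamma$ enforced by the descent scheme. You supply more detail than the paper does --- in particular the explicit treatment of the $a=0$ cancellation case and the intermediate-value localization at $\tau^\star$ --- but the underlying argument is identical.
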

\begin{proof}
	The proof is analogous to the stability-certificate provided in \citep{mesanovic2018optimalparameter} and is not shown in detail due to page limitations. It is based on the fact that, due to Assumption~\ref{assum.ContParams}, $s_{pij}$ is a continuous function of $\vK$, and cannot discretely ``jump over'' $\gamma(t)$. Thus, for $s_{pij}$ to cross $\gamma$, $\Gamma$ would have to be increased to very high values (Corollary~\ref{corol.GammaLarge}) in at least one step during the minimization, which is not allowed by the optimization method.
\end{proof}
The assumption about pole-zero cancellations does not restrict generality, because if this occurs, $\gamma$ can be shifted to avoid the cancellation and the minimization can be repeated.
\begin{theorem}[pole shifting] \label{thm.PoleShift}
	Given a detectable system\\$G(\vK, s)$, with the set of poles $\cS_p(\vK)$, and an optimization curve $\gamma(t)$. Furthermore, given two parameterizations $\vK_1$ and $\vK_2$ and any pole $s_p (\vK_1) \in \cS_p(\vK_1)$ such that $\myDist(s_p(\vK_1), \gamma)$ is sufficiently small and that $\Gamma(\vK_1)$ and $\Gamma(\vK_2)$ are reached in the neighborhood of $s_p$.
	Under Assumption~\ref{assum.ContParams}, 
	if $\Gamma(\vK_2)$ is sufficiently smaller than $\Gamma(\vK_1)$,
	the distance of $s_p(\vK_2)$ to $\gamma$ is greater than the distance of $s_p(\vK_1)$ to $\gamma$, i.e.
	\begin{align} \label{eq.PoleShift}
	\myDist(s_p(\vK_1), \gamma) < \myDist(s_p(\vK_2), \gamma).
	\end{align}
\end{theorem}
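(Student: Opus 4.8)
The plan is to compare $\Gamma$ at the two parameterizations through the local approximation of $\bigSigma(G)$ around $s_p$ furnished by Theorem~\ref{thm.SigmaApprox}. For $i\in\{1,2\}$, since by hypothesis $\Gamma(\vK_i)$ is attained in a neighborhood of $s_p$ and $\myDist(s_p(\vK_i),\gamma)$ is small, we may use~\eqref{eq.SigmaApprox}: in that neighborhood $\bigSigma(G(\vK_i,s)) \approx a_i/|s-s_p(\vK_i)|^{n_p}$ with $a_i \ge 0$. Restricted to the curve $\gamma$, the right-hand side is a strictly decreasing function of $|s-s_p(\vK_i)|$, hence it is maximized at the point of $\gamma$ closest to $s_p(\vK_i)$, with maximum value $a_i/\myDist(s_p(\vK_i),\gamma)^{n_p}$. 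Consequently
\begin{align}
\Gamma(\vK_i) \approx \frac{a_i}{\myDist(s_p(\vK_i),\gamma)^{n_p}}, \qquad i\in\{1,2\}. \label{eq.GammaDistRel}
\end{align}

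Next I would use Assumption~\ref{assum.ContParams} to control the coefficients $a_i$. The quantity $a$ in Theorem~\ref{thm.SigmaApprox} is assembled from the residue-type terms $N_{ij}(s_p)/D_{ij}(s_p)$ of $G$ at $s_p$, which depend continuously on $\vK$ because $G(\vK,s)$ does; likewise $s_p(\vK)$ moves continuously and, by Theorem~\ref{lemma.PolesNoCross}, stays on the same side of $\gamma$, so $\myDist(s_p(\cdot),\gamma)$ is well defined and the same pole governs $\Gamma$ at both parameterizations. Since $s_p$ is a genuine pole of $G(\vK_1,\cdot)$ we have $a_1>0$, and for $\vK_2$ in a suitable neighborhood of $\vK_1$ the value $a_2$ is close to $a_1$; in particular there is a constant $c>0$ with $a_2 \ge c\,a_1$.

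Finally, rearranging~\eqref{eq.GammaDistRel} gives $\myDist(s_p(\vK_i),\gamma)^{n_p} \approx a_i/\Gamma(\vK_i)$, so that
\begin{align}
\frac{\myDist(s_p(\vK_2),\gamma)^{n_p}}{\myDist(s_p(\vK_1),\gamma)^{n_p}} \approx \frac{a_2}{a_1}\cdot\frac{\Gamma(\vK_1)}{\Gamma(\vK_2)} \ge c\,\frac{\Gamma(\vK_1)}{\Gamma(\vK_2)},
\end{align}
which exceeds $1$ as soon as $\Gamma(\vK_2) < c\,\Gamma(\vK_1)$, i.e. whenever $\Gamma(\vK_2)$ is sufficiently smaller than $\Gamma(\vK_1)$. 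Taking $n_p$-th roots yields~\eqref{eq.PoleShift}.

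The main obstacle is making the quantifiers precise, because~\eqref{eq.GammaDistRel} is only valid up to an error that shrinks with the neighborhood of $s_p$, and the constant $c$ degrades as $\vK_2$ drifts away from $\vK_1$. A rigorous argument therefore has to order the choices: first fix the neighborhood of $s_p$ so the error in~\eqref{eq.SigmaApprox} is a small fraction of $\Gamma$; then shrink the neighborhood of $\vK_1$ so that $a$ and $s_p$ vary only slightly (fixing $c$); and only then impose the threshold on $\Gamma(\vK_2)$, absorbing the accumulated error terms into the gap $\Gamma(\vK_1)-\Gamma(\vK_2)$. This layering is exactly what the three ``sufficiently'' clauses in the statement encode.
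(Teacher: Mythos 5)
Your argument is essentially identical to the paper's: both express $\Gamma(\vK_i) \approx a_i/\myDist(s_p(\vK_i),\gamma)^{n_p}$ via Theorem~\ref{thm.SigmaApprox} and Corollary~\ref{corol.GammaLarge}, take the ratio of the two distances, and conclude~\eqref{eq.PoleShift} once the reduction in $\Gamma$ outweighs the change in the residue coefficient $a$ (the paper's condition $a(\vK_2)/(a(\vK_1)\eta) > 1$ is your condition $\Gamma(\vK_2) < c\,\Gamma(\vK_1)$ in disguise). Your closing discussion of how to order the ``sufficiently'' quantifiers is a welcome clarification of a point the paper only addresses informally in the remark following its proof, but it does not change the route.
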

\begin{proof}
	If $\gamma$ is sufficiently close to $s_p$, then $\Gamma$ is achieved in the neighborhood of $s_p$ (Corollary~\ref{corol.GammaLarge}), and equals to (Theorem~\ref{thm.SigmaApprox})
	\begin{align}
		\Gamma(\vK) \approx \max_t \frac{a(\vK)}{|s_p - \gamma(t)|^{n_p}}  = \frac{a(\vK)}{\myDist(s_p(\vK),\gamma)^{n_p}}. \label{eq.approxGamma}
	\end{align}
	From $\Gamma(\vK_1) > \Gamma(\vK_2)$, we obtain
	\begin{align}
		\frac{a(\vK_1)\eta}{\myDist(s_p(\vK_1),\gamma)^{n_p}} = \frac{a(\vK_2) \cdot }{\myDist(s_p(\vK_2),\gamma)^{n_p}},
	\end{align}
	where $\eta<1$ quantifies the reduction of $\Gamma$. From the previous equation, it follows
	\begin{align}
	\frac{\myDist(s_p(\vK_2),\gamma)}{\myDist(s_p(\vK_1),\gamma)} = \left(\frac{a(\vK_2) }{a(\vK_1)\cdot\eta} \right)^{1/n_p}. \label{eq.OptImprovementPush}
	\end{align}
	Thus, if $\eta$ is sufficiently small such that $\frac{a(\vK_2) }{a(\vK_1)\cdot\eta} >1$,~\eqref{eq.PoleShift} directly follows.
\end{proof}

Note that a sufficient value for $\eta$ depends on the parameter dependency of $a(\vK)$: ideally, if changes of $a(\vK)$ remain small,~\eqref{eq.OptImprovementPush} is equivalent to $\eta<1$. From~\eqref{eq.approxGamma}, it follows that $\partial \: \Gamma / \partial \: \myDist(s_p(\vK),\gamma) \gg \partial \: \Gamma / \partial \: a$ in a neighborhood of $s_p$. Thus, $\gamma$ can be chosen sufficiently close to $s_p$, such that the steepest descent direction during optimization leads to the displacement of $s_p(\vK)$, instead of the minimization of $a(\vK)$.

The following consequence arises from Theorem~\ref{thm.PoleShift}:
\begin{corollary}
	The minimization of the $\Hinf$ norm of a stable system $G(s)$, i.e. $\HinfNorm{G(\vK,s)}$, improves damping of poles which are sufficiently close to the imaginary axis.
\end{corollary}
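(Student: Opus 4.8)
The plan is to recognize the $\Hinf$ norm as the special case of the quantity $\Gamma(\vK)$ obtained when the optimization curve is the imaginary axis, and then to invoke the pole‑shifting result. Recall that for a \emph{stable} transfer matrix $G(\vK,s)$ one has $\HinfNorm{G(\vK,s)} = \sup_{\omega} \bigSigma(G(\vK,\jw))$, which is exactly $\Gamma(\vK)$ for the curve $\gamma_\Delta(\omega) = \jw$ with $\Delta = 0$ in the notation of Fig.~\ref{fig.AlgStabVis}. Hence minimizing $\HinfNorm{G(\vK,s)}$ over the admissible parameters is precisely minimizing $\Gamma(\vK)$ along the imaginary axis, and the whole machinery developed for $\Gamma$ applies verbatim.

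First I would fix a pole $s_p(\vK_1) \in \cS_p(\vK_1)$ that is sufficiently close to the imaginary axis. By Corollary~\ref{corol.GammaLarge}, $\Gamma$ is then attained in a neighborhood of $s_p$, so the hypotheses of Theorem~\ref{thm.PoleShift} are satisfied. If the minimization returns $\vK_2$ with $\Gamma(\vK_2)$ sufficiently smaller than $\Gamma(\vK_1)$ — and, by Theorem~\ref{lemma.PolesNoCross} together with Assumption~\ref{assum.ContParams}, $s_p$ does not cross $\gamma$ and therefore remains in the open left half‑plane — then Theorem~\ref{thm.PoleShift} yields $\myDist(s_p(\vK_1), \gamma) < \myDist(s_p(\vK_2), \gamma)$.

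The second step is to translate this distance inequality into a damping statement. Since $\gamma$ is the imaginary axis, the nearest point of $\gamma$ to a pole $s_p = \myRe(s_p) + j\,\myIm(s_p)$ is $j\,\myIm(s_p)$, so $\myDist(s_p,\gamma) = |\myRe(s_p)|$. Thus the inequality above reads $|\myRe(s_p(\vK_1))| < |\myRe(s_p(\vK_2))|$: the pole is pushed deeper into the left half‑plane, so its decay rate $-\myRe(s_p)$ strictly increases. To phrase this as an improvement of the damping ratio $\zeta(s_p) = -\myRe(s_p)/|s_p|$, I would additionally argue — using the local approximation of Theorem~\ref{thm.SigmaApprox}, where $\Gamma(\vK) \approx a(\vK)/\myDist(s_p(\vK),\gamma)^{n_p}$ and $\partial\Gamma/\partial\myDist$ dominates $\partial\Gamma/\partial a$ near $s_p$ — that for a sufficiently close pole and a sufficiently small reduction the imaginary part $\myIm(s_p)$ is essentially unchanged; $\zeta$ is monotone increasing in $|\myRe(s_p)|$ at fixed $|\myIm(s_p)|$, which gives the claim.

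The main obstacle is exactly this last point: Theorem~\ref{thm.PoleShift} controls only the distance of the pole to $\gamma$, not its motion \emph{along} $\gamma$, so without an extra locality estimate one cannot a priori rule out a simultaneous large increase of $|\myIm(s_p)|$ that would offset the gain in $|\myRe(s_p)|$. Making the "sufficiently close'' and "sufficiently small reduction'' quantifiers precise enough that the pole's displacement is dominated by its horizontal component — so that the damping ratio, and not merely the decay rate, is guaranteed to improve — is the delicate part; under the decay‑rate interpretation of damping the corollary is an immediate consequence of Theorem~\ref{thm.PoleShift}.
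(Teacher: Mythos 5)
Your proposal is correct and takes essentially the same route as the paper: the paper's proof is exactly the substitution $\gamma(\omega)=\jw$ into Theorem~\ref{thm.PoleShift}, identifying $\Gamma(\vK)=\HinfNorm{G(\vK,s)}$ and concluding that minimization pushes nearby poles away from the imaginary axis. Your additional caveat distinguishing decay rate from damping ratio is a fair observation the paper does not address, but it does not change the argument, which the paper interprets simply as increasing the distance to the imaginary axis.
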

\begin{proof}
	The proof is obtained by inserting $\gamma(\omega) = \jw$ into Theorem~\ref{thm.PoleShift}. For this case, $\Gamma(\vK) = \HinfNorm{G(\vK,s)} = \max_{\omega \in \R} \bigSigma(G(\jw)) $, and its minimization will push the poles away from the imaginary axis.
\end{proof}

The previous theorem shows that minimization of $\Gamma$ will displace sufficiently close poles away from $\gamma$. However, an appropriate method to minimize $\Gamma$ is needed. 
We propose to do so by applying a similar optimization procedure as presented in~\citep{mesanovic2018optimalparameter}.
\begin{theorem}[$\Gamma$ minimization] \label{thm.GammaMin}
	Given a detectable multiple-input-multiple-output system with the transfer function $G(\vK, s)$ which satisfies Assumption~\ref{assum.ContParams}. Furthermore, given a continuous function $\gamma(t) \in \C$, $t \in \R$. Assuming there are no cancellations of parameter-dependent poles and zeros on $\gamma$, 
	the solution of
	\begin{subequations} \label{eq.GammaFiniteOptProb}
		\begin{align}
		\!\!\!\!\!\!\!\min_{\Gamma, \vK}  & \quad \Gamma \\
		\!\!\!\!\!\!\!\text{s.t.} \quad & \begin{pmatrix}
		\Gamma I     & G(\vK, \gamma(t_k)) \\
		G(\vK, \gamma(t_k))^* & \Gamma I
		\end{pmatrix} \succ 0, \quad \forall t_k \in \Omega \label{eq.GammafiniteConstraints} \\
		& \ul{\vK} \leq \vK \leq \ol{\vK}
		\end{align}
	\end{subequations}
	minimizes $\Gamma (\vK) = \max_t \:  \bigSigma(G(\vK,\gamma(t)))$ for the sampling set $\Omega$. Here $\ul{\vK}$ and $\ol{\vK}$ are box constraints on the controller parameters, which may be $\pm \infty$.
\end{theorem}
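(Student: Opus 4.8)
The plan is to reduce Theorem~\ref{thm.GammaMin} to a standard fact linking the largest singular value of a complex matrix to a linear matrix inequality, and then to account for the finite sampling set $\Omega$. First I would recall the characterization $\bigSigma(M) < \Gamma$ (equivalently $\Gamma I - M^*M \succ 0$ with $\Gamma > 0$) for any complex matrix $M$, and apply the Schur complement to the block matrix in~\eqref{eq.GammafiniteConstraints}: the $2\times 2$ block LMI with $\Gamma I$ on the diagonal and $G(\vK,\gamma(t_k))$ off-diagonal is positive definite if and only if $\Gamma > 0$ and $\Gamma I - \tfrac{1}{\Gamma} G(\vK,\gamma(t_k))^* G(\vK,\gamma(t_k)) \succ 0$, which is equivalent to $\bigSigma(G(\vK,\gamma(t_k))) < \Gamma$. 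Hence the constraint~\eqref{eq.GammafiniteConstraints} holding for all $t_k \in \Omega$ is exactly the statement $\max_{t_k \in \Omega} \bigSigma(G(\vK,\gamma(t_k))) < \Gamma$.

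Next I would argue that, at the optimum of~\eqref{eq.GammaFiniteOptProb}, the objective value $\Gamma^*$ coincides with $\max_{t_k\in\Omega}\bigSigma(G(\vK^*,\gamma(t_k)))$ for the minimizing $\vK^*$: minimizing $\Gamma$ subject to $\Gamma > \max_{t_k\in\Omega}\bigSigma(G(\vK,\gamma(t_k)))$ and the box constraints $\ul{\vK}\le\vK\le\ol{\vK}$ drives $\Gamma$ down to that maximum (the infimum is attained in the limiting sense, or attained exactly if one replaces $\succ$ by $\succeq$). Therefore the joint minimization over $(\Gamma,\vK)$ is equivalent to $\min_{\ul{\vK}\le\vK\le\ol{\vK}} \max_{t_k\in\Omega} \bigSigma(G(\vK,\gamma(t_k)))$, i.e. it minimizes the sampled version of $\Gamma(\vK) = \max_t \bigSigma(G(\vK,\gamma(t)))$, which is the claim. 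Assumption~\ref{assum.ContParams} guarantees $G(\vK,\gamma(t_k))$ depends continuously on $\vK$, so the sampled max is well-defined and continuous, and the no-cancellation hypothesis ensures each sampled point $\gamma(t_k)$ is not a pole, so the LMIs are finite-valued.

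The main obstacle — and the point that deserves the most care — is the gap between the \emph{sampled} maximum over $\Omega$ and the \emph{true} maximum over all $t\in\R$. The theorem as stated only claims minimization of the sampled quantity, so strictly speaking no argument about the continuum is needed; but to make the result useful one should note that when $\gamma$ has a relevant behavior only over a bounded region (e.g. singular values decaying as $|s-s_p|^{-n_p}$ away from poles, by Theorem~\ref{thm.SigmaApprox}) a sufficiently fine, sufficiently wide grid $\Omega$ makes the sampled max an arbitrarily good approximation of $\Gamma(\vK)$, uniformly on the compact box $[\ul{\vK},\ol{\vK}]$ (intersected with the region where no pole lies on $\gamma$). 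I would state this as a remark rather than fold it into the proof. A secondary subtlety is the strict versus non-strict inequality in~\eqref{eq.GammafiniteConstraints}: with $\succ 0$ the feasible set is open and the infimum $\Gamma^* = \max_{t_k}\bigSigma(G(\vK^*,\gamma(t_k)))$ is approached but not attained at a single feasible point, so one should either interpret the problem as computing this infimum or, as is standard in semidefinite programming practice, regard the strict LMI as the limiting case of $\succeq \epsilon I$ and let $\epsilon \downarrow 0$.
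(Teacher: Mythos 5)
Your proof is correct and follows essentially the same route as the paper: the Schur complement converts the block LMI~\eqref{eq.GammafiniteConstraints} into the condition $\bigSigma(G(\vK,\gamma(t_k))) < \Gamma$, and restricting to the sampling set $\Omega$ gives the discretized min--max problem. You are in fact somewhat more careful than the paper's own argument, which glosses over the strict-inequality/infimum-attainment issue and the gap between the sampled and continuous maxima (and contains a typo, writing $\gamma^2$ where $\Gamma^2$ is meant).
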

\begin{proof}
	Since $\bigSigma \: (G(\gamma(t)))^2 = \overline{\lambda} (G(\gamma(t))^* G(\gamma(t)))$, it follows
	\begin{align}
	& \max_t \: \bigSigma(G(\vK,\gamma(t))) \le \Gamma \\
	\Leftrightarrow \quad & \overline{\lambda} (G(\gamma(t))^* G(\gamma(t))) < \gamma^2, \quad \forall t \in \R\\
	\Leftrightarrow \quad &  G(\gamma(t))^* G(\gamma(t)) - \gamma^2 I \prec 0, \quad \forall t \in \R.
	\end{align}
	By using the Schur complement on the last expression and by discretization, we obtain~\eqref{eq.GammafiniteConstraints}, which is the basis for~\eqref{eq.GammaFiniteOptProb}.
\end{proof}
Note that $\Omega$ does not have to span the entire $\R$. It only needs to be defined around poles which are close to $\gamma(t)$, where $\Gamma$ is reached. Furthermore, $\Omega$ does not have to be infinitely dense. Due to Corollary~\ref{corol.GammaLarge}, it suffices that $\exists t_k \in \Omega$, such that $\gamma(t_k)$ is in a sufficiently small neighborhood of $s_p$. This way, the rise of $\Gamma$ due to $s_p$ is captured by the discretization. Thus, only several values need to be in $\Omega$ around poles close to $\gamma$, which allows for efficient optimization.

However, Problem~\eqref{eq.GammaFiniteOptProb} is still non-convex due to the nonlinear parameter dependency in $G(\vK,s)$. To use linear matrix inequality solvers, we transform~\eqref{eq.GammaFiniteOptProb} into a series of convex optimization problems. In each iteration, $G$ is linearized around the parameter vector $\vK^\kitprev$ from the previous iteration, and $G_L^\kit(\vK,s)$ is obtained. The following problem is then iteratively solved
\begin{subequations} \label{eq.GammaFiniteOptProbConvex}
	\begin{align}
	& \min_{\Gamma, \vK}   \quad \Gamma \\
	& \text{s.t.} \quad  \begin{pmatrix}
	\Gamma I     & G_L^\kit(\vK, \gamma(t_k)) \\
	G_L^\kit(\vK, \gamma(t_k))^* & \Gamma I
	\end{pmatrix} \succ 0, \quad \forall t_k \in \Omega\\
	& \qquad \ul{\vK} \leq \vK \leq \ol{\vK} \qquad |\vK - \vK^\kitprev| \leq \vDK.
	\end{align}
\end{subequations}
Here the last constraint is added to preserve the linearization accuracy.
Algorithm~\ref{algo.GammaMinimization} summarizes the proposed approach to minimize $\Gamma$. 
\begin{algorithm2e}[h]\label{algo.GammaMinimization}
	\KwIn{$G$, $\vK_{0}$, $\vDK$, $k_{max}$, $\gamma$} 
	$k=1$, choose $0< \alpha < 1$\;
	\While{$k\le k_{max}$ or not converged}{
	$G_{L}^\kit (\vK,s) = $  linearization of the parametric dependency of $G(\vK,s)$ around $\vK^\kitprev$\;
	  	$\vK^\kit = $ solution of~\eqref{eq.GammaFiniteOptProbConvex}\;
	  	\If{$\Gamma(\vK^\kit) \ge \Gamma(\vK^\kitprev)$ or a pole crossed $\gamma$}{
	  		$\vDK = \vDK \times \alpha$\label{st.TRAdapt} - increase linearization accuracy\;
	  		$\vK^\kit = \vK^\kitprev$\;
		}
	$k = k+1$
	}
	\Return{$\vK^\kitprev$}
	\caption{Minimization of $\Gamma(\vK)$.}
\end{algorithm2e}

For the proposed iterative pole placement approach, a suitable $\gamma$ is chosen in each iteration, such that minimization of $\Gamma$ pushes the poles towards the target region.  

In the following, we focus on the question of stabilization, for which the target region for all poles is the left half-plane, as this case is considered in subsequent numerical studies. To this end, a family of vertical lines, described with $\gamma_\Delta(\omega) = \Delta + \jw$, is considered. To stabilize the system, $\Delta$ is gradually reduced until all poles are in the left half-plane, as illustrated in Fig.~\ref{fig.AlgStabVis}.

Algorithm~\ref{algo.SVStabilization} outlines the proposed algorithm to achieve stability.
\begin{algorithm2e}[h]\label{algo.SVStabilization}
	\SetAlgoLined
	\KwResult{Stabilizing controller parameterization}
	\KwIn{$G(\vK)$, $\vK^{(0)}$, $\vKmin$, $\vKmax$}
	$\mu = 1$;  $s_{p,max}^{(0)} = \argmax_{s_p \in \cS_p(\vK^{(0)})} \: \myRe(s_p)$\;
	\While{$\myRe(s_{p,max}^\muitprev)>0$}{
		Select $\Delta^\muit > \myRe(s_{p,max}^\muitprev)$ s.t. $\gamma_{\Delta}^\muit(\omega) = \Delta^\muit + \jw$ is sufficiently close to $s_{p,max}^\muitprev$\;
		$\vK^\muit = $ solution of Algorithm~\ref{algo.GammaMinimization} for $\gamma^\muit(\omega)$ 
		$s_{p,max}^\muit = \argmax_{s_p \in \cS_p(\vK^\muit)} \: \myRe(s_p)$\;
		$\mu = \mu+1$\;
	}
	\Return{$\vK^\muitprev$}\;
	\caption{Stabilization}
\end{algorithm2e}
In the $\mu$-th iteration, the pole with the maximal real value $s_{p,max}^\muitprev$ is calculated. Then, $\Delta^\muit$ is chosen such that the maximum of $\Gamma^\muit$ is reached in the neighborhood of $s_{p,max}^\muitprev$. Thereby, $\Delta^\muit$ should be greater than $s_{p,max}^\muitprev$ to push the poles to the left half-plane. The algorithm terminates when all poles are in the left half-plane.
Determining $\Delta^\muit$ is important for the efficiency of the algorithm. If it is too small, $\Gamma^\muit$ must be minimized by a large amount to push the pole away, due to the high local gradients around poles. On the other hand, if it is too far from $s_{p,max}^\muitprev$, $\Gamma^\muit$ will not be reached in the neighborhood of $s_{p,max}^\muitprev$, and minimization of $\Gamma$ is not guaranteed to push the pole away from $\gamma$.  We determine $\Delta$ by first finding the largest value $\Delta_{max}$ via a line-search, such that the maximum of $\gamma$ is still achieved in the neighborhood of $s_{p,max}^\muitprev$. Then, we set $\Delta^\muit = 0.5\left(\Delta_{max} + \myRe(s_{p,max}^\muitprev)\right)$, which solves both of the previous issues in the considered numerical examples.

\section{Application examples}
\label{Section.NumExamples}
We show the applicability of the approach on three systems with different initial parameterizations\footnote{For the optimization, we use a Windows computer with an Intel$^\circledR$ i7-4810MQ CPU running at 2.8 GHz and with 8 GB of RAM.}.
Note that the presented computation times should only give a reference about the computational complexity, as more tailored methods and special hardware would allow to further decrease the computation time. All controllers in the considered systems are purely local with different structures.

\subsection{Four power plant model}

We first consider a four power plant system, c.f. \citep{kundur93a}, as shown in Fig.~\ref{fig.KundurGrid}. Detailed parameters of the system and the initial controller parameterization $\vK_{i,4}$ can be found in \citep{Mesanovic17ISGT}. The system consists of 56 states and 32 controller parameters. Static prosumers, representing disturbance inputs, are marked blue in Fig.~\ref{fig.KundurGrid}.

For comparison, we use an approach which is based on the use of a Lyapunov matrix $P$ in the optimization problem, where the condition $A(\vK)^T P + PA(\vK)\prec 0, P\succ 0$ is used for stability. It is solved using an iterative coordinate-descent method, often called PK iteration, with iterative parameter linearization. 
Subsequently, we denote this as the PK method.
\begin{figure}[tb]
	\centering
	\includegraphics[width=1\columnwidth]{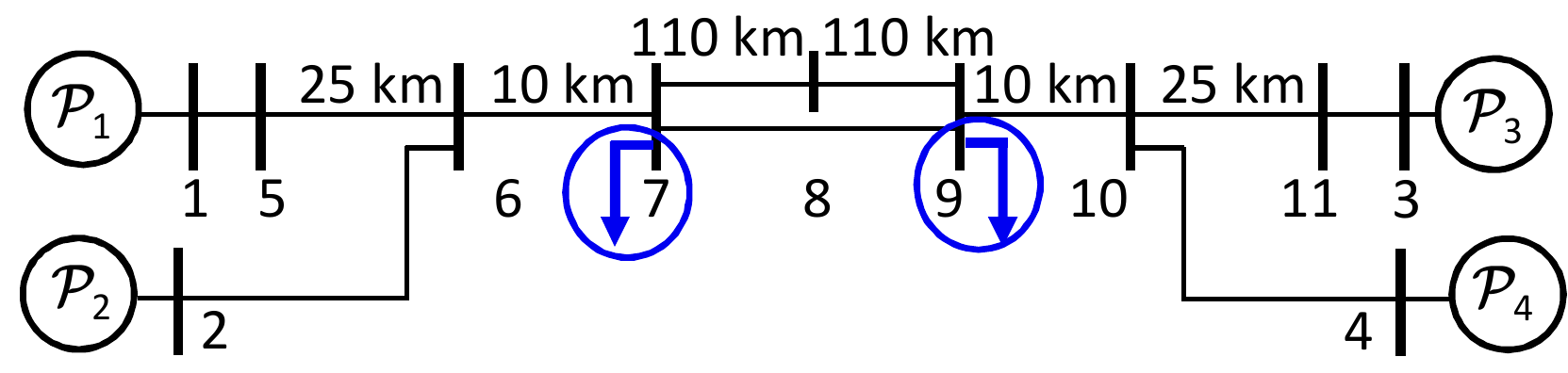}
	\caption{Power system with four power plants from \citep{kundur93a}. Static prosumers, which are modeled as disturbance inputs, are marked blue.}
	\label{fig.KundurGrid}
\end{figure}
\begin{table}[tb]
	\begin{center}
		\caption{Optimization results for the \mbox{4-prosumer} system with the SV method.}\label{table.KundurSV}
		\begin{tabular}{ccccc}
			\thead{Init.\\ param.} & \thead{\# unst.\\ poles} & \textbf{$\mathbf{\max \myRe(s_{pij})}$} & \thead{Comp.\\time} & \thead{\# iter.} \\ \hline
			 $1.25\:\vK_{i,4}$   &   2    &    0.058      & 1.8 s    &     1    \\
			 $1.5\:\vK_{i,4}$   &   2    &    0.43     & 5 s    &     1    \\ 
			 $2\:\vK_{i,4}$   &   5    &    1.3     & 6.4 s    &     2    \\\hline
		\end{tabular}
	\end{center}
\end{table}
\begin{table}[b]
	\begin{center}
		\caption{Optimization results for the \mbox{4-prosumer} system with the PK method.}\label{table.KundurPK}
		\begin{tabular}{ccccc}
			\thead{Init.\\ param.} & \thead{\# unst.\\ poles} & \textbf{$\mathbf{\max \myRe(s_{pij})}$} & \thead{Comp.\\time} & \thead{\# iter.} \\ \hline
			$1.25\:\vK_{i,4}$   &   2    &    0.058      & 180 s    &     6    \\
			$1.5\:\vK_{i,4}$   &   2    &    0.43     & 215 s    &     7    \\ \hline
		\end{tabular}
	\end{center}
\end{table}

Table~\ref{table.KundurSV} shows the results of the optimization with the proposed singular value (SV) method for different unstable initial parameterizations, obtained by multiplying $\vK_{i,4}$ with a scaling factor. The computation time depends on $\max \myRe(s_{pij})$, as more iterations are necessary if the unstable poles have a larger real part. For the three tested parameterizations, the SV method stabilized the system in 1.8 s, 5 s, and 6.4 s, respectively. On the other hand, the PK method, whose results are shown in Table~\ref{table.KundurPK}, required 180 s, and 215 s, respectively, while it could not stabilize the system in the third case. Improvements per iteration for the SV and PK method are shown in Fig~\ref{fig.KundurPoles}. Thus, the SV method significantly outperforms the PK method for this system.

\begin{figure}[tb]
	\centering
	\includegraphics[width=1\columnwidth]{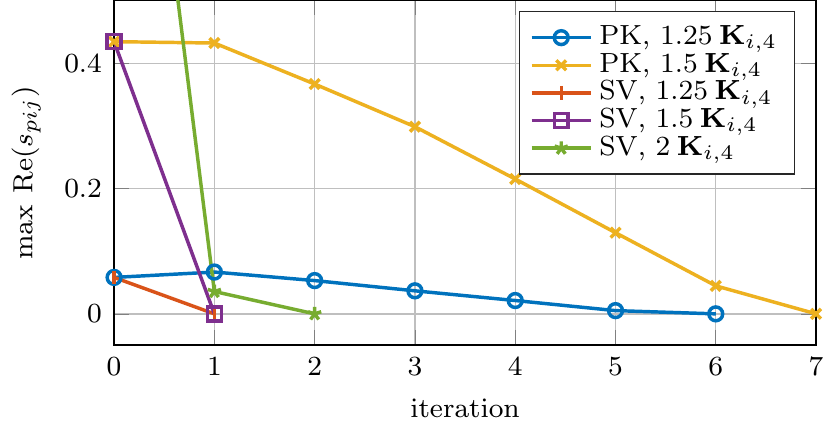}
	\caption{Improvement of $\max \: \myRe(s_{pij})$ per iteration with different initial parameterizations for the 4-prosumer system.} 
	\label{fig.KundurPoles}
\end{figure}

\subsection{The IEEE 39 bus 10 power plant model}

We next consider the IEEE 39-bus system, consisting of 10 power plants, c.f. Fig~\ref{fig.IEEE39}. The parameters of the system can be found in \citep{Mesanovic17ISGT}. It consists of 190 states and 100 controller parameters. We denote the initial controller parameterization from \citep{Mesanovic17ISGT} as $\vK_{i,10}$.
\begin{figure}[tb]
	\centering
	\includegraphics[width=1\columnwidth]{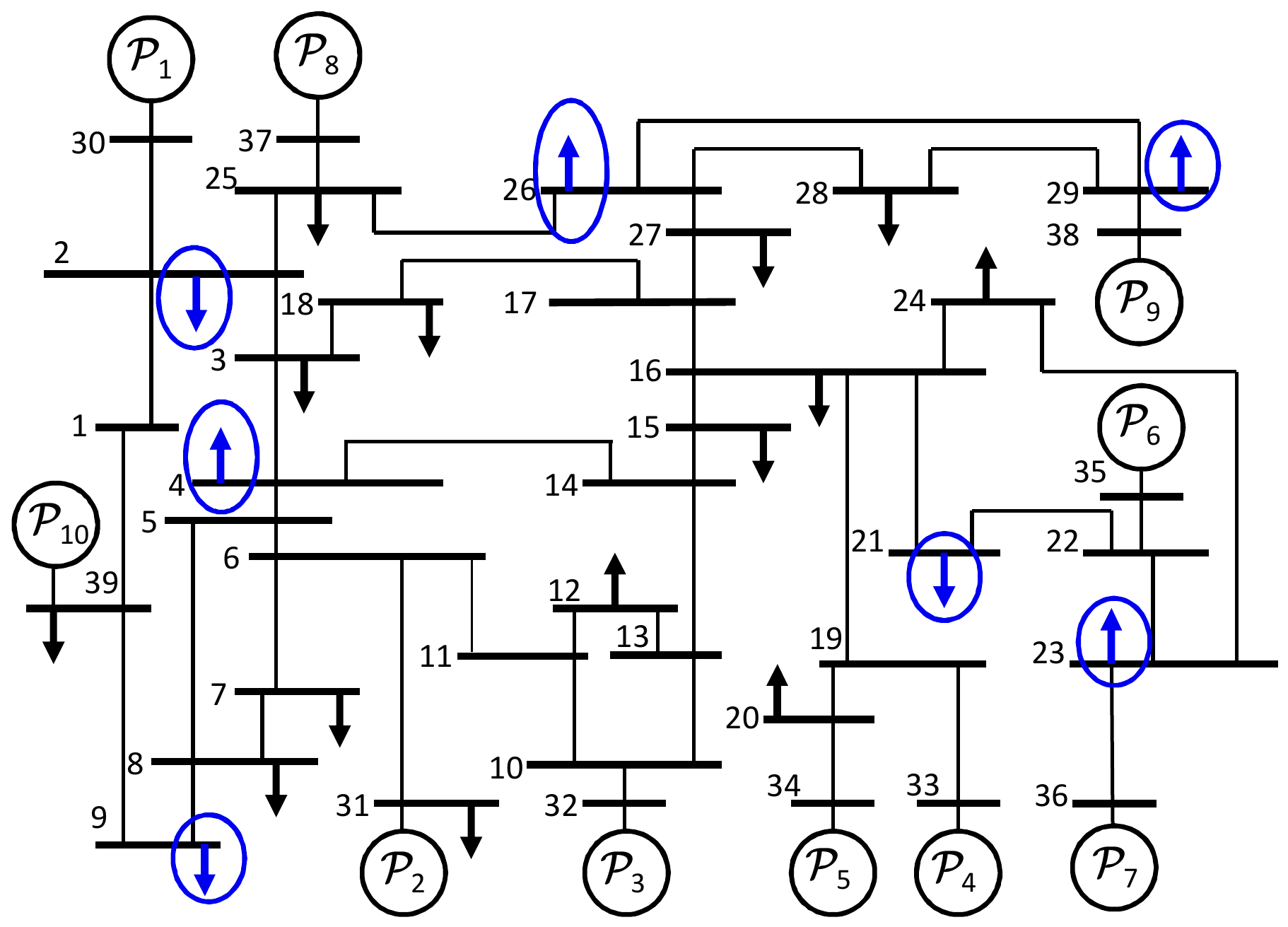}
	\caption{IEEE 39-bus system with 10 dynamic prosumers. Static prosumers modeled as disturbance inputs are marked blue.}
	\label{fig.IEEE39}
\end{figure}
\begin{table}[tb]
	\begin{center}
		\caption{Optimization results for the \mbox{10-prosumer} system with the SV method.}\label{table.IEEESV}
		\begin{tabular}{ccccc}
			\thead{Init.\\ param.} & \thead{\# unst.\\ poles} & \textbf{$\mathbf{\max \myRe(s_{pij})}$} & \thead{Comp.\\time} & \thead{\# iter.} \\ \hline
			$1.5\:\vK_{i,10}$   &   7    &    0.54      & 19 s    &     5    \\
			$1.7\:\vK_{i,10}$   &   11    &    1.83     & 24 s    &     4    \\ 
			$2\:\vK_{i,10}$   &   12    &    4.7     & 65 s    &     12    \\
			$3\:\vK_{i,10}$   &   18    &    17.9     & 259 s    &     12    \\\hline
		\end{tabular}
	\end{center}
\end{table}

Table~\ref{table.IEEESV} shows the computation results with the SV method. It shows that, even when the poles are far in the right half-plane, a stabilizing parameterization can be found in less than 5 minutes. Even though the approach requires the same amount of iterations to stabilize the system with $2\:\vK_{i,10}$ and $3\:\vK_{i,10}$, the computation times are different because the internal minimization of $\Gamma$ requires more time per iteration for $3\:\vK_{i,10}$.
For a system of this size, the PK method did not provide any results within reasonable time. Figure~\ref{fig.IEEE39Poles} shows the improvement in each iteration for the considered system.

\begin{figure}[tb]
	\centering
	\includegraphics[width=1\columnwidth]{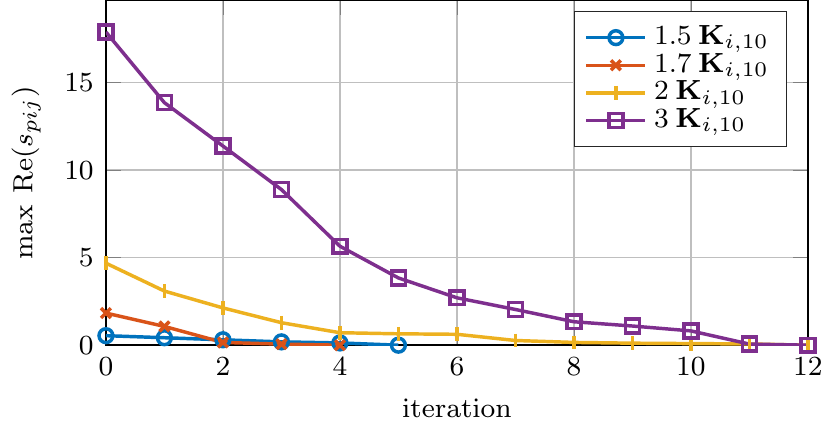}
	\caption{Improvement of $\max \: \myRe(s_{pij})$ per iteration with different initial parameterizations using the SV method for the 10-prosumer system.}
	\label{fig.IEEE39Poles}
\end{figure}

\subsection{European 53 power plant model}

Finally, we consider a model with 53 power plants and 35 buses, developed within the research project~\cite{dynagrid}.
The controllers used for this model are presented in \citep{mesanovic2018optimalparameter}.
A more detailed description of the considered system cannot be presented due to space limitations. The system has a total of 469 states, 116 controller parameters, and 15 disturbance inputs.

\begin{table}[tb]
	\begin{center}
		\caption{Optimization results for the \mbox{53-prosumer} system with the SV method.}\label{table.DynaGridSV}
		\begin{tabular}{ccccc}
			\thead{Init.\\ param.} & \thead{\# unst.\\ poles} & \textbf{$\mathbf{\max \myRe(s_{pij})}$} & \thead{Comp.\\time} & \thead{\# iter.} \\ \hline
			$\vK_{1,53}$   &   4    &    0.26      & 250 s    &     1    \\
			$\vK_{2,53}$   &   4    &    0.5      & 470 s    &     2    \\
		\end{tabular}
	\end{center}
\end{table}

Table~\ref{table.DynaGridSV} shows the computation results with the SV method for two initial parameterizations. Note that the computation times are significantly impacted by the calculation of $\Delta^\muit$ in each iteration, requiring from 50 s to 100 s per iteration. Still, the SV method can find a stabilizing solution for this large system in a reasonable time frame.

\section{Conclusion}
\label{sec.Conclusion}

As the share of renewable generation in power system rises, new methods for power system control become necessary to reduce the risk for a blackout. We proposed an approach for pole placement  formulated in the frequency domain. As it does not introduce a Lyapunov matrix, the approach generally scales nicely. Nonlinear parameter dependencies can be considered as well. The approach is based on iterative minimization of the largest singular value along an optimization curve. The applicability of the approach was demonstrated in three simulation studies with different initial parameterizations. Future work will consider more efficient methods to determine $\Delta$, as well as validation for other target regions.


%


\bibliography{references_amer_mesanovic}{}   
                                       
\end{document}

